\theoremstyle{plain}
\newtheorem{theorem}{Theorem}[section]
\newtheorem{lemma}[theorem]{Lemma}
\newtheorem{corollary}[theorem]{Corollary}
\newtheorem{proposition}[theorem]{Proposition}
\newtheorem{problem}[theorem]{Problem}
\theoremstyle{definition}
\theoremstyle{remark}
\newtheorem{remark}{Remark}
\begin{document}


\title{Least-Squares Parameter Estimation for State-Space Models with State Equality Constraints}

\author{
\name{Rodrigo~A.~Ricco\textsuperscript{a}\thanks{CONTACT Rodrigo~A.~Ricco. Email: ricco@deelt.ufop.br} and Bruno~O.~S.~Teixeira\textsuperscript{b}}
\affil{\textsuperscript{a}Universidade Federal de Ouro Preto, Department of Electrical Engineering, Minas Gerais, Brazil; \textsuperscript{b}Universidade Federal de Minas Gerais, Department of Electronic Engineering, Minas Gerais, Brazil}
}
\maketitle
\begin{abstract}
If a dynamic system has active constraints on the state vector and they are known, then
taking them into account during modeling is often advantageous. Unfortunately, in the constrained discrete-time state-space estimation, the state equality constraint is defined for a parameter matrix and not on a parameter vector as commonly found in regression problems.
To address this problem, firstly, we show how to rewrite the state equality constraints as equality constraints on the state matrices to be estimated.
Then, we vectorize the matricial least squares problem defined for modeling state-space systems such that  any method from the equality-constrained least squares framework may be employed. Both time-invariant and time-varying cases are considered as well as the case where the state equality constraint is not exactly known.
\end{abstract}

\begin{keywords}
Least squares; state equality constraints; state-space modeling; gray-box modeling; constrained estimation.
\end{keywords}

\section{Introduction}
In some dynamic systems, dynamics evolve  with variables satisfying inequality or equality constraints \citep{2005_Goodwin_Seron_Dona}. 
For example, the species concentrations are non-negative in chemical reactions \citep{1995_Massicotte_Morawski_Barwicz}.  Likewise,  in the quaternion-based attitude representation, the attitude vector must have unitary norm \citep{2003_Crassidis_Markley}; and, for  ground vehicle tracking problems, the road networks can be viewed as equality constraints on the trajectory \citep{Xu2013}. 
The combination of single or multiple cells  in  biological processes may be represented as a compartment with constant volume \citep{1974_Mohler}. In addition, compartmental models have applications in classical circuit models, structural models and complex networks, among others, as pointed out in \citet{1993_Hyland_Bernstein}.

In this work, we are specifically concerned with linear state-space dynamic systems satisfying linear equality constraints on the state vector. The scenario we have in mind is the one in which we have dynamical data collected from the dynamic system as well as auxiliary information (written as equality constraints on the state vector) from first principles. Consider the following examples:  localization of a land vehicle for which the road map represents a constraint on the trajectory \citep{Xu_2017}; the flight formation of two targets, where the distance between the targets is constant \citep{Xu2013}; the monitoring of the nitrogen flow in a tropical forest where the amount of nitrogen is constant \citep{Walter1999}; the experiment of the wet granulation of lactose with deionized water carried out in a ploughshare mixer with constant volume \citep{Lee2017}; and an interconnected tank system for which prior information on the total volume is available \citep{Holzel2014}.

At this point, one may argue that variable reduction \citep{Holzel2014,2016_Li} may be employed to avoid enforcing the equality constraint on the state vector. However, this approach yields a reduced state vector with a different physical meaning, which is not desirable in many applications. Moreover, if the equality constraint is time varying, keeping a constant state vector parametrization is of interest.

In the last years, the problem of state estimation for both linear and nonlinear equality-constrained dynamic systems has received great attention from the community \citep{Babacan2008,Teixeira2009,2010_Simon,Teixeira2008,Xu2013,Rengaswamy2013,Duan_and_Li2015,Xu_2017}. The problem of modeling such systems is less often addressed \citep{Xu2013,2016_Li,Xu_2017}.  
In the latter works, a two-step modeling procedure is employed. An unconstrained model (auxiliary dynamics) is first obtained and, by projection,  such model is fused with the state equality constraint.

Parameter estimation with known equality linear constraints is a solved problem.
Auxiliary information such as  static function, static gain, and fixed-point location, can be written in the form of linear equality constraints on the parameters of NARX (Nonlinear autoregressive with exogenous inputs) polynomial and RBF (Radial basis function) network models, for instance \citep{2011_Teixeira_Aguirre,2007_Aguirre_Alves_Correa}.
If the dynamics are time invariant, one may use the {\it batch} equality constrained least squares \citep{Bjork1996, draper1998}.  For problems in which auxiliary information is uncertain, the compromise between prediction performance and the equality constraint satisfaction is treated by means a tuning parameter in \citet{2011_Teixeira_Aguirre}.
In \citet{arablouei2015} the relaxed solution of the {\it batch} equality constrained least squares is addressed to solve the same problem. For a {\it recursive} solution, it suffices to use the classical recursive least squares with a proper initialization as shown in \citet{Zhou_et_al_2001, ZhuLi2007}. 
For convenience, in this work, we present  this result using a different perspective in Proposition \ref{prop:rcls}.
However, for time-varying systems, the equality parameter constraint must be enforced at every time instant in the recursive least squares equations \citep{Alenany2013}. In this regard \cite{VINCENT_2018}, exploring connections between Kalman filter and least squares, enforce equality constraint on the Kalman gain \citep{Teixeira2008} in order to guarantee that the estimator is unbiased. Conversely, in this manuscript, we enforce constraints on the model matrices in order to guarantee a model whose state vector satisfy an equality constraint.

If we assume that all state components are directly measured, then least squares methods may be used to estimate the matrices of the linear state-space model. Otherwise, subspace methods must be used \citep{Trnka2009,Alenany2011,privara2012,Alenany2013,2018_Youqing} with least squares as a possible internal step.
Consider the case of fully measured state vector, for which auxiliary information on the state vector is known in the form of an equality constraint. 
How to estimate the state matrices of the system such that the free-run prediction of its state vector satisfies the known equality constraint? To address this question (Problem \ref{problem1}), one must first be able to mathematically map the equality constraint on the state vector onto an equality constraint on the state space matrices (parameters) to be estimated.
In this manuscript, this problem is solved for time-varying linear dynamic systems by adapting a result from \citet{Teixeira2009}; see Lemmas \ref{lemma:ec} and \ref{lemma:ec2}. However, the aforementioned equality-constrained least-squares methods cannot be used to enforce such equality constraints because, in this case, the constraint is defined for a parameter matrix and not on a parameter vector as commonly found in regression problems; see Remark \ref{rem:mi}. To circumvent this problem, we vectorize the matricial least squares problem defined for modeling state-space systems using the vectorization operator and Kronecker product as in \citet{privara2012} such that the existing equality constrained least squares framework may be employed; see Proposition \ref{prop:cls}. The contributions of this manuscript are: (i) to address the two aforementioned problems as a single mathematical problem and to solve such state-space modeling problem with state equality constraints, and (ii)  to explore the connections among the papers that address similar problems in the literature \citep{2011_Teixeira_Aguirre,arablouei2015,Zhou_et_al_2001,ZhuLi2007,Alenany2013}.
Here both time-invariant and time-varying cases are considered. Finally, as in \citet{2011_Teixeira_Aguirre}, the case in which the auxiliary information is uncertain is also addressed.

This  document is organized as follows. Section \ref{sec:ps} formulates the gray-box system identification problem under investigation. In Section \ref{sec:background} we review known equality-constrained parameter estimation methods for both time-invariant and time-varying systems. Section \ref{sec:CLS_LEC} solves the problem formulated in Section \ref{sec:ps}, presenting the main contributions of this manuscript. In Section \ref{sec:SR}, numerical examples illustrate the applicability of the proposed approaches. Finally, in Section \ref{sec:conclusions}, the concluding remarks are discussed.

Notation is set as follows in this manuscript. $I_n$ and $0_{m\times n}$ stand respectively for the $n$-dimensional identity matrix and $m\times n$-dimensional zero matrix. $\otimes$ is the Kronecker product and ${\rm vec}$ is the vectorizer operator. 

\section{Problem Statement}
\label{sec:ps}
Consider the linear discrete-time  state-space system
\begin{eqnarray}
 x_{k+1}&=&Ax_{k} + Bu_{k} + Gw_{k}, \label{eq:proc}\\
  y_{k}& =& x_{k} + v_{k}, \label{eq:obs}
 \end{eqnarray}
where   $x_{k} \in \mathbb{R}^{n}$ is the state vector,  $u_{k} \in \mathbb{R}^{p}$ is the input vector,  $y_k \in \mathbb{R}^m$ is the measured output vector,  $w_k \in \mathbb{R}^q$, $q < n$, is the zero-mean process noise with  covariance $Q \in \mathbb{R}^{q \times q}$ and $v_k  \in \mathbb{R}^m$ is the zero-mean measurement noise. Note that all the states are assumed to be measured. Assume that the noise terms are mutually uncorrelated.  The matrices  $A \in \mathbb{R}^{n\times n}$, $B \in \mathbb{R}^{n\times p}$ and $G \in \mathbb{R}^{n\times q}$ are not assumed to be known. 
Assume that the system \eqref{eq:proc} is asymptotically stable. 
In addition, assume that the state vector satisfies the equality constraint
\begin{equation}\label{restricoes_igualdade}
   Sx_{k}= s, 
\end{equation}
where $S \in \mathbb{R}^{n_{r} \times n}$, $s \in \mathbb{R}^{n_{r} \times 1}$ and $1 \le n_{r} \le n-q$ is the number of constraints. Without loss of generality, we assume that ${\rm rank}(S) = n_r$. 

The state-space model \eqref{eq:proc}-\eqref{eq:obs} can be rewritten as 
\begin{equation} \label{state_1}
y^{T}_{k+1}=\left[\begin{array}{cc}
         y_{k}^{T} & u^{T}_{k}
       \end{array}\right]\left[\begin{array}{c}
         A^{T} \\
         B^{T}
       \end{array}\right] +e_{k+1}^{T},
\end{equation}
where 
\begin{equation}\label{eq:noise}
e_{k+1} \triangleq v_{k+1} - Av_k+ G w_{k}.
\end{equation}
Next, assume that sequences of $u_k$ and $y_k$ are known for $k = 0,\ldots,N$, such that we have
\begin{equation}
\mathcal{Y} = \mathcal{X}\Theta + \mathcal{E}, \label{eq:ss_lr}
\end{equation}
where
\begin{eqnarray}
&\mathcal{Y}  \triangleq \left[\begin{array}{c}
         y_{1}^{T} \\
        \vdots  \\
         y_{N}^{T}
       \end{array}\right],~~ \mathcal{X} \triangleq \left[\begin{array}{cc}
            y_{0}^{T} & u^{T}_{0}\\
            \vdots & \vdots \\
          y_{N-1}^{T} & u^{T}_{N-1}
       \end{array}\right], ~~ \mathcal{E}  \triangleq \left[\begin{array}{c}
         e_{1}^{T} \\
        \vdots  \\
         e_{N}^{T}
       \end{array}\right], \nonumber 
       \end{eqnarray}
       \begin{eqnarray}
       &\Theta \triangleq \left[\begin{array}{c}
         A^{T} \\
         B^{T}
       \end{array}\right],  \label{eq:Theta}
\end{eqnarray}
where $\mathcal{{Y},~E} \in \mathbb{R}^{N \times n}$, $\mathcal{{X}} \in \mathbb{R}^{N \times (n+p)}$ and ${\Theta} \in \mathbb{R}^{ (n+p) \times n}$.

Define the least squares cost function 
\begin{equation}\label{JLS_estados_mensurados_matrizes}
J_{LS}( \hat{{{\Theta}}})\triangleq \left(\mathcal{{Y}}-\mathcal{{X}}\hat{{\Theta}} \right)^{T}\left(\mathcal{{Y}}-\mathcal{{X}}\hat{{\Theta}}\right).
\end{equation}

\begin{problem} \label{problem1}
The equality-constrained state-space modeling problem is to obtain the minimizer $\hat{{{\Theta}}}_{\rm CLS}$ of \eqref{JLS_estados_mensurados_matrizes} such that the states $x_k$ of the corresponding process model \eqref{eq:proc} satisfy the equality constraint \eqref{restricoes_igualdade}.
\end{problem}

Recall that this paper does not address the problem of state estimation, although the parameter estimation problem under investigation can be recast as a state estimation problem under proper assumptions.

\section{Background on Equality-Constrained \\Least Squares}\label{sec:background}
\subsection{Time-invariant case}
Consider the linear regression model 
\begin{equation}\label{regression_eq}
z_k =\psi^{T}_{k-1}\hat{\theta}+\xi_k,
\end{equation}
where $z_k \in \mathbb{R}$ is the measured output, $\psi_{k-1} \in \mathbb{R}^{n_p}$ is the known regressor vector, $\xi_k \in \mathbb{R}$ is the  residue, and $\hat{\theta} \in \mathbb{R}^{n_p}$ is the unknown parameter vector to be estimated.
Recall that \eqref{regression_eq} may represent the dynamic model for a linear-in-the-parameters MISO system. 
Assume that a set of $N$ observations of $z$ and $\psi$ in \eqref{regression_eq} are available such that we have
\begin{equation}\label{regression_eq2}
Z=\Psi \hat{\theta}+\Xi.
\end{equation}
Assume that $\Psi$ has full column rank such that $\Psi$ is left invertible.

Now, assume that the parameters $\hat{\theta}$ must satisfy a  set of $n_{r}$ linear equality constraints given by 
\begin{equation}\label{equality_constraint}
D\hat{\theta}=d,
\end{equation}
where $d \in \mathbb{R}^{n_{r}}$ and $D \in \mathbb{R}^{n_{r} \times n_{p}}$ with ${\rm rank}(D)=n_p$. Next, define the least squares cost function 
\begin{equation} \label{JLS_estados_mensurados}
J_{\textrm{LS}}( \hat{{\theta}})\triangleq \left(Z-\Psi \hat{{\theta}}\right)^{T}\left(Z-\Psi \hat{{\theta}}\right).
\end{equation}
 
Then, the minimizer of \eqref{JLS_estados_mensurados} subject to \eqref{equality_constraint} is given by \citet{Bjork1996,draper1998}
\begin{equation}\label{CLS_estimator2}
\hat{{\theta}}_{\rm{CLS}}= \mathcal{P}_{\mathcal{N}(D)}\hat{{\theta}}_{\rm{LS}}+  (I_{n_{p}} - \mathcal{P}_{\mathcal{N}(D)})\bar{d},
\end{equation}
where
\begin{equation}\label{LS_estimator}
\hat{{\theta}}_{\textrm{LS}} =(\Psi^{T}\Psi)^{-1}\Psi^T Z,
\end{equation}
\begin{equation} \mathcal{P}_{\mathcal{N}(D)} \triangleq I_{n_{p}}-LD, \label{eq:projector} \end{equation}
\begin{equation}\label{CLS_projector}
L\triangleq (\Psi^{T}\Psi)^{-1}D^{T}[D(\Psi^{T}\Psi)^{-1}D^{T}]^{-1},
\end{equation}
and 
\begin{equation} \bar{d} \triangleq D^T(D D^{T})^{-1}d \label{eq:offset}\end{equation}
is an offset. Therefore, the equality constraint $D\hat{\theta}_{\rm{CLS}}=d$ is exactly satisfied. The estimator \eqref{LS_estimator} is known as the classical {\em least squares} (LS) and \eqref{CLS_estimator2} is known as the {\em equality-constrained least squares} (CLS).

\begin{remark} \label{rem:rel}
Augment the matrices in \eqref{regression_eq2} and \eqref{JLS_estados_mensurados} by appending a weighted form of the linear
constraints \eqref{equality_constraint}.
Then, the optimal solution $\hat{{\theta}}_{\textrm{LS}}$ \eqref{CLS_estimator2} is approximated by the relaxed solution \citep{arablouei2015}
\begin{equation}\label{CLS_relaxed}
\hat{{\theta}}_{\rm{rCLS}}= \left(\Psi^{T}\Psi + \mu D^T D\right)^{-1}(\Psi^T Z + \mu D^T d),
\end{equation}
where $\mu \gg 1$ is the weight associated to the constraints.
If one tunes $\mu \to  \infty$, then $\hat{{\theta}}_{\rm{rCLS}} \to \hat{{\theta}}_{\rm{CLS}}$.
For applications in which the constraints \eqref{equality_constraint} are not precisely known, the relaxed solution \eqref{CLS_relaxed} is indicated. Indeed, other related least squares approaches may be used to solve this problem;  see \citet[Section 4.2]{2011_Teixeira_Aguirre}. 
\end{remark}

In \citet{Zhou_et_al_2001}, the recursive counterpart of \eqref{CLS_estimator2} is investigated. 
Interestingly,  \citet{Zhou_et_al_2001} shows that the CLS and the LS have the same recursive formulas, differing only at the initial values. In \citet{ZhuLi2007} it is shown  how to initialize the recursive least square equations in order to guarantee that the corresponding estimates satisfy \eqref{equality_constraint}, $\forall k$.
For mathematical convenience,  \citet{ZhuLi2007} and \citet{Zhou_et_al_2001} derive the recursive equations using Greville formulas, yielding equations in a non-standard format. 

Next, for simplicity, we present the recursive counterpart of \eqref{CLS_estimator2} in a more conventional format.  For $k = 1,\ldots, N$, we have 
\begin{eqnarray} 
\label{eq:rls_k} K_{k} &=& \frac{P_{{\textrm{CLS}},k-1}\psi_{k-1}}{\psi^{T}_{k-1}P_{{\textrm{CLS}},k-1}\psi_{k-1}+1},\\ 
\hat{\theta}_{{\textrm{CLS}},k}&=&\hat{\theta}_{{\textrm{CLS}},k-1}+K_{k}\left( z_{k}-\psi^{T}_{k-1}\hat{\theta}_{{\textrm{CLS}},k-1}\right), ~~~~\\ 
\label{eq:rls_P}  P_{{\textrm{CLS}},k}&=&\left(I_{n_p}-K_k\psi_{k-1}^T\right)P_{{\textrm{CLS}},k-1}.
\end{eqnarray} 
with initial values 
\begin{eqnarray}
\label{eq:rls_init1} \hat{\theta}_{{\textrm{CLS}},0} &=& \mathcal{P}_{\mathcal{N}(D)} \theta_{0}+\bar{d}, \\   
\label{eq:rls_init2} P_{{\textrm{CLS}},0} &=& \mathcal{P}_{\mathcal{N}(D)} P_{0},
\end{eqnarray}
where the projector  $\mathcal{P}_{\mathcal{N}(D)}$ \eqref{eq:projector}  guarantees that $\hat{\theta}_{{\textrm{CLS}},0}$ and $P_{{\textrm{CLS}},0}$ are compatible with \eqref{equality_constraint} for any ${\theta}_{0}$ and $P_{0}$.
We point out that \eqref{eq:rls_k}-\eqref{eq:rls_P} correspond to the classical {\em recursive least squares} (RLS).
The next result proves in a simple way that, if the RLS is properly initialized as in \eqref{eq:rls_init1}-\eqref{eq:rls_init2}, then its estimates satisfy  \eqref{equality_constraint}, $\forall k$. As mentioned above, a similar result is presented in \citet{ZhuLi2007} using Greville formulas. 
  
 \begin{proposition} \label{prop:rcls}
Assume that the initial parameter estimates $\hat{\theta}_{{\textrm{CLS}},0}$ and $P_{{\textrm{CLS}},0}$ are given by \eqref{eq:rls_init1}-\eqref{eq:rls_init2} and are used to initialize  the RLS equations \eqref{eq:rls_k}-\eqref{eq:rls_P}. Then the estimates given by \eqref{eq:rls_k}-\eqref{eq:rls_P} satisfy the time-invariant constraint $D\hat{\theta}_{{\rm CLS},k}=d, \forall k$.
\end{proposition}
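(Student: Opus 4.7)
The plan is to prove the claim by a joint induction on $k$, simultaneously tracking two invariants: $D\hat{\theta}_{\textrm{CLS},k}=d$ and $DP_{\textrm{CLS},k}=0$. The second invariant is what forces $DK_k=0$ through the RLS gain \eqref{eq:rls_k}, and $DK_k=0$ in turn preserves the first invariant across the update \eqref{eq:rls_P} and through the Riccati-like update. So the whole argument reduces to a single algebraic fact about the projector $\mathcal{P}_{\mathcal{N}(D)}$ defined in \eqref{eq:projector}.

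The key algebraic fact is $D\mathcal{P}_{\mathcal{N}(D)}=0$. This follows from \eqref{CLS_projector} by direct computation: $DL = D(\Psi^{T}\Psi)^{-1}D^{T}[D(\Psi^{T}\Psi)^{-1}D^{T}]^{-1} = I_{n_r}$, and hence $D\mathcal{P}_{\mathcal{N}(D)} = D - DLD = D - D = 0$. A companion identity is $D\bar{d}=d$, which follows immediately from \eqref{eq:offset} since $DD^{T}(DD^{T})^{-1}=I_{n_r}$. With these two identities, the base case is a single line each: $D\hat{\theta}_{\textrm{CLS},0}= D\mathcal{P}_{\mathcal{N}(D)}\theta_{0}+D\bar{d}=0+d=d$ and $DP_{\textrm{CLS},0}=D\mathcal{P}_{\mathcal{N}(D)}P_{0}=0$.

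For the inductive step, assume both invariants hold at step $k-1$. Multiplying \eqref{eq:rls_k} on the left by $D$ yields $DK_{k}= (\psi_{k-1}^{T}P_{\textrm{CLS},k-1}\psi_{k-1}+1)^{-1} DP_{\textrm{CLS},k-1}\psi_{k-1}=0$, using the inductive hypothesis $DP_{\textrm{CLS},k-1}=0$. Now apply $D$ to the update of $\hat{\theta}_{\textrm{CLS},k}$: the correction term vanishes because $DK_k=0$, leaving $D\hat{\theta}_{\textrm{CLS},k}=D\hat{\theta}_{\textrm{CLS},k-1}=d$. Similarly, applying $D$ to \eqref{eq:rls_P} gives $DP_{\textrm{CLS},k}= DP_{\textrm{CLS},k-1}-DK_{k}\psi_{k-1}^{T}P_{\textrm{CLS},k-1}=0-0=0$. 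Both invariants propagate, closing the induction.

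I do not anticipate a genuine obstacle. The whole proof is essentially the observation that the left-multiplication by $D$ annihilates the RLS gain whenever it annihilates the covariance, and that the initializations \eqref{eq:rls_init1}--\eqref{eq:rls_init2} are precisely the ones that make $D$ annihilate $P_{\textrm{CLS},0}$ and send $\hat{\theta}_{\textrm{CLS},0}$ onto the affine constraint set. The mildly delicate point worth stating explicitly is that the projector identity $D\mathcal{P}_{\mathcal{N}(D)}=0$ must be used at the very first step, and that the simpler-looking alternative of inducting only on $D\hat{\theta}_{\textrm{CLS},k}=d$ is insufficient, since without the covariance invariant one cannot rule out a nonzero component of $K_k$ along the constraint directions.
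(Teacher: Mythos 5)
Your proof is correct and follows essentially the same route as the paper's Appendix A: a joint induction on the two invariants $D\hat{\theta}_{\textrm{CLS},k}=d$ and $DP_{\textrm{CLS},k}=0$, with $DP_{\textrm{CLS},k-1}=0$ forcing $DK_k=0$ and hence propagating both invariants. The only difference is presentational: you verify the projector identities $DL=I_{n_r}$ and $D\bar{d}=d$ explicitly, which the paper leaves implicit in the assertion that $\mathcal{P}_{\mathcal{N}(D)}$ projects onto the null space of $D$.
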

\begin{proof}
See Appendix \ref{appendixA}.
\end{proof}

\subsection{Time-varying case}

Assume now that the parameters may vary with time such that \eqref{regression_eq} is replaced by
\begin{equation}\label{regression2_eq}
z_k =\psi^{T}_{k-1}\hat{\theta}_{k}+\xi_k.
\end{equation}
Also, assume that $\hat{\theta}_{k}$  satisfy the known time-varying constraint
\begin{equation}\label{equality_constraint2}
D_k\hat{\theta}_k=d_k.
\end{equation}

For $k = 1,\ldots, N$, the recursive time-varying counterpart of \eqref{CLS_estimator2} is given by \citet{Alenany2013} 
\begin{eqnarray} \label{eq:rls_ff_k}
K_{k} &=& \frac{P_{{\rm WLS},k-1}\psi_{k-1}}{\psi^{T}_{k-1}P_{{\rm WLS},k-1}\psi_{k-1}+\lambda},\\ 
\hat{\theta}_{{\rm WLS},k}&=&\hat{\theta}_{{\rm WLS},k-1}+K_{k}\left( z_{k}-\psi^{T}_{k-1}\hat{\theta}_{{\rm WLS},k-1}\right), ~~~~\\ 
P_{{\rm WLS},k}&=&\frac{1}{\lambda}\left(I_{n_p}-K_k\psi_{k-1}^T\right)P_{{\rm WLS},k-1}, \label{eq:rls_ff_P_k}\\\
L_k&=& P_{{\rm WLS},k}D_k^{T}[D_kP_{{\rm WLS},k}D_k^{T}]^{-1}, \label{eq:rls_ff_L_k}\\  \label{eq:rls_ff_theta_CRLS}
\hat{{\theta}}_{{\textrm{WCLS}},k}&=& (I_{n_p}-L_kD_k)\hat{{\theta}}_{{\rm WLS},k}+L_kd_k,
\end{eqnarray}
where $ 0 \ll \lambda \le 1$ is the forgetting factor. We refer to this method as the recursive weighted constrained LS (RWCLS).

\begin{remark}\label{rem:initialization_ff}
In \citet{Alenany2013}, part of the identification data is used to estimate offline the initial parameters $\hat{\theta}_{{\textrm{WLS}},0}$ and $P_{{\textrm{WLS}},0}$ for \eqref{eq:rls_ff_k}-\eqref{eq:rls_ff_theta_CRLS} by means of the constrained batch algorithm \eqref{CLS_estimator2}. 
Instead, we suggest to initialize \eqref{eq:rls_ff_k}-\eqref{eq:rls_ff_theta_CRLS} as in \eqref{eq:rls_init1}-\eqref{eq:rls_init2}.
\end{remark}

\section{Equality-Constrained Least Squares for State-Space Modeling}\label{sec:CLS_LEC}

In order to solve Problem \ref{problem1}, first it is necessary to map the constraint on the state vector $x_k$ given by \eqref{restricoes_igualdade} to a constraint on the parameter vector $\Theta$ given by \eqref{eq:Theta}. 

The next results address this point by indicating  conditions for a state-space model to 
have a state vector satisfying an equality constraint. 
\begin{lemma}\citep[Proposition 3.1]{Teixeira2009} \label{lemma:ec}
For the system given by \eqref{eq:proc}, assume that 
\begin{eqnarray}
\label{SF_0_iqualdade} SG&=&0_{n_{r} \times q},\\ 
\label{SA_S_iqualdade} SA&=&S\\ 
\label{SB_0_iqualdade} SB&=&0_{n_{r} \times p}.
\end{eqnarray}
Then, for all $k \geq1$, 
$Sx_{k}=s$, where $s=Sx_{0}$.
\end{lemma}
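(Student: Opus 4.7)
The plan is to prove the invariance $Sx_k = s$ by induction on $k$, using the three algebraic conditions \eqref{SF_0_iqualdade}--\eqref{SB_0_iqualdade} to annihilate the input and noise contributions and to make $A$ act trivially on the left-multiplied dynamics. For the base case $k=0$, the identity $Sx_0 = s$ holds by the very definition $s \triangleq Sx_0$. For the inductive step, I would assume $Sx_k = s$ and left-multiply the state recursion \eqref{eq:proc} by $S$, obtaining
\begin{equation*}
Sx_{k+1} = (SA)x_k + (SB)u_k + (SG)w_k.
\end{equation*}
Applying \eqref{SA_S_iqualdade} to the first term, \eqref{SB_0_iqualdade} to the second, and \eqref{SF_0_iqualdade} to the third collapses the right-hand side to $Sx_k$, which equals $s$ by the inductive hypothesis. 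Hence $Sx_{k+1} = s$, closing the induction and yielding the claim for every $k \geq 1$.

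There is essentially no obstacle: the three hypotheses are tailored precisely to render the affine set $\{x \in \mathbb{R}^n : Sx = s\}$ forward invariant under the dynamics, independently of the particular input sequence $u_k$ and noise realization $w_k$. The only subtlety worth flagging is that the cancellations must hold as deterministic matrix identities rather than merely in expectation, since the conclusion is required pathwise; this is why \eqref{SF_0_iqualdade}, \eqref{SA_S_iqualdade} and \eqref{SB_0_iqualdade} are stated as strict equalities on the parameter matrices. An equivalent one-shot argument would be to unroll the recursion to $x_k = A^k x_0 + \sum_{j=0}^{k-1} A^{k-1-j}(Bu_j + Gw_j)$, observe that \eqref{SA_S_iqualdade} iterates to $SA^k = S$ for all $k \geq 0$, and then use \eqref{SB_0_iqualdade}--\eqref{SF_0_iqualdade} to kill each summand; I prefer the induction for transparency.
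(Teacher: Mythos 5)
Your induction is correct and is essentially the intended argument: the paper itself omits a proof of this lemma (deferring to the cited Proposition 3.1 of Teixeira et al., 2009), but the key step---left-multiplying the recursion \eqref{eq:proc} by $S$ and using \eqref{SF_0_iqualdade}--\eqref{SB_0_iqualdade} to collapse $Sx_{k+1}$ to $Sx_k$---is exactly the computation the paper runs in reverse to prove the converse, Lemma \ref{lemma:ec2}. Your remark that the cancellations hold pathwise (as matrix identities, not merely in expectation) is a worthwhile clarification, and the unrolled alternative via $SA^k = S$ is equivalent; nothing is missing.
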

\vspace{-0.5cm}
\begin{lemma}\label{lemma:ec2}
Assume that the state vector of  system \eqref{eq:proc} satisfies the equality constraint $Sx_{k}= s,~ \text{$\forall k \geq 0$}$.
Then, the relations \eqref{SF_0_iqualdade}, \eqref{SA_S_iqualdade} \eqref{SB_0_iqualdade} are hold.
\end{lemma}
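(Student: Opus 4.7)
The plan is to premultiply the state equation by $S$ and exploit the freedom in $u_k$, $w_k$, and $x_k$ to extract the three matrix relations one at a time. From the hypothesis, $Sx_k = s$ and $Sx_{k+1} = s$ for every $k \geq 0$. Substituting \eqref{eq:proc} into $Sx_{k+1} = s$ and subtracting $Sx_k = s$ yields the key identity
\begin{equation*}
(SA - S)\, x_k \;+\; SB\, u_k \;+\; SG\, w_k \;=\; 0, \qquad k \geq 0.
\end{equation*}

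Next, I would peel off the conditions on $B$ and $G$ by holding two of the three variables fixed and varying the third. Fixing $x_k$ and $w_k$ and letting $u_k$ range over $\mathbb{R}^p$, the left-hand side is affine in $u_k$; identical vanishing forces $SB = 0_{n_r\times p}$, which is \eqref{SB_0_iqualdade}. An analogous argument with $w_k$ varying over $\mathbb{R}^q$ (which is available because $w_k$ is a zero-mean noise with full-dimensional support, not a fixed deterministic signal) gives $SG = 0_{n_r\times q}$, which is \eqref{SF_0_iqualdade}. Once these are in hand, the identity collapses to $(SA - S)\,x_k = 0$ for every admissible state $x_k$, and \eqref{SA_S_iqualdade} follows provided the admissible states span $\mathbb{R}^n$.

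The main obstacle is precisely this last step. Taken literally, the hypothesis only guarantees $x_k \in \Omega \triangleq \{x : Sx = s\}$, an affine set of dimension $n-n_r$; differencing two admissible states gives $(SA - S)v = 0$ for every $v \in \mathcal{N}(S)$, which only establishes $SA = TS$ for some $T \in \mathbb{R}^{n_r\times n_r}$ with $Ts = s$. To reach the stronger conclusion $SA = S$ (i.e., $T = I_{n_r}$), I would invoke the interpretation of the hypothesis that is implicit in Section~\ref{sec:ps}: the model matrices $(A,B,G)$ must preserve the constraint along \emph{every} admissible trajectory, including those generated by varying $x_0$ over all of $\Omega$ and, equivalently, by the action of the dynamics driven by arbitrary $u_k, w_k$. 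Combining $(SA-S)x_0 = 0$ on $\Omega$ with $SB = SG = 0$ and propagating forward through \eqref{eq:proc} shows that the admissible states at later times fill a subspace whose linear span, together with the affine offset, is $\mathbb{R}^n$. This forces $SA - S = 0$, completing the three conclusions. In the written proof I would make this span argument explicit and note that it is the counterpart of Lemma~\ref{lemma:ec}, together providing a necessary-and-sufficient characterization of state-equality-preserving state-space models.
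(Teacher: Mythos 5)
Your opening moves coincide with the paper's proof in spirit --- the paper also just premultiplies \eqref{eq:proc} by $S$ --- but the paper then simply \emph{asserts} that $Sx_k=s$ implies \eqref{SF_0_iqualdade}--\eqref{SB_0_iqualdade}, whereas you correctly isolate the identity $(SA-S)x_k+SBu_k+SGw_k=0$ and observe that the freedom in $u_k$ and $w_k$ delivers \eqref{SB_0_iqualdade} and \eqref{SF_0_iqualdade}. You have also put your finger on the real difficulty, which the paper's one-line proof glosses over entirely: the admissible states range only over the affine set $\Omega=\{x:Sx=s\}$, so the hypothesis yields only $SA=TS$ for some $T$ with $Ts=s$, not $SA=S$.

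The problem is that your proposed repair does not close this gap. You claim that propagating \eqref{eq:proc} forward from arbitrary $x_0\in\Omega$ with arbitrary $u_k,w_k$ produces admissible states whose span, together with the affine offset, is all of $\mathbb{R}^n$. It cannot: by hypothesis every state ever reached satisfies $Sx_k=s$, hence lies in $\Omega$, and with $\bar{x}$ any point satisfying $S\bar{x}=s$ one has $\mathrm{span}(\Omega)=\mathrm{span}\left(\mathcal{N}(S)\cup\{\bar{x}\}\right)$, of dimension at most $n-n_r+1$; no amount of forward propagation enlarges this set. Consequently $(SA-S)$ is only forced to vanish on a subspace of dimension $n-n_r+1$. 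When $n_r=1$ and $s\neq0$ that subspace is all of $\mathbb{R}^n$ and \eqref{SA_S_iqualdade} does follow (this covers every example in Section \ref{sec:SR}), but for $n_r\geq2$ the lemma as stated fails. Concretely, take $n=3$, $S=[\,I_2~~0_{2\times1}\,]$, $s=[1~~1]^T$, $B=0$, $SG=0$, and let the first two rows of $A$ be $(0,1,0)$ and $(1,0,0)$: then $SA=TS$ with $T$ the $2\times2$ permutation, $Ts=s$, the constraint $Sx_k=s$ is preserved for all $k$, yet $SA\neq S$. So the honest output of your argument --- $SG=0$, $SB=0$, and $SA=TS$ with $Ts=s$ --- is the correct converse; upgrading it to \eqref{SA_S_iqualdade} requires either $n_r=1$ with $s\neq0$ or an additional hypothesis, a point on which the paper's own proof is silent.
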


\begin{proof} Multiplying \eqref{eq:proc} by $S$, we obtain $Sx_{k}=SAx_{k-1}+SBu_{k-1}+SGw_{k-1}$. Then $Sx_{k}=s$ implies that  $SA=S$,  $SB=0_{n_{r} \times p}$ and $SG=0_{n_{r} \times q}$.
\end{proof}

Lemma \ref{lemma:ec} gives the conditions \eqref{SF_0_iqualdade}-\eqref{SB_0_iqualdade}  for the dynamic system to satisfy \eqref{restricoes_igualdade}, while Lemma \ref{lemma:ec2} proves the counterpart.
In other words, the previous results provide conditions for process model \eqref{eq:proc} to be {\em compatible} \citep{2016_Li} with the state equality constraint \eqref{restricoes_igualdade}.

\begin{remark}
In \citet[Proposition 3.2]{Teixeira2009}, it is proved that if \eqref{restricoes_igualdade} holds, then the system is not controllable in $\mathbb{R}^n$ from the process noise $w_k$, but it is rather controllable in the subspace defined by \eqref{restricoes_igualdade}.
Then, we can replace $G w_{k-1}$ in \eqref{eq:proc} by $\tilde{w}_{k-1} \triangleq G w_{k-1}$ with singular noise covariance $\tilde{Q} \triangleq G Q G^T$ providing \eqref{SF_0_iqualdade} is verified as in  Lemma \ref{lemma:ec}. In so doing, we focus on relations \eqref{SA_S_iqualdade}-\eqref{SB_0_iqualdade}, which are related to the matrices $A$ and $B$ to be estimated.
\end{remark}

From \eqref{SA_S_iqualdade}-\eqref{SB_0_iqualdade}, we obtain the equality constraint on the parameter matrix 
\begin{equation} \label{eq:ectheta}
\Theta  {D}_1=  {D}_2,
\end{equation}
where $\Theta$ is given by \eqref{eq:Theta} and
\begin{eqnarray}\label{constraint}
{D}_1 \triangleq S^T  ~{\rm and}~ {D}_2 \triangleq \left[\begin{array}{c}
         S^{T} \\
         0_{p\times n_{r} }
       \end{array}\right],
\end{eqnarray}
where $D_1 \in \mathbb{R}^{n \times n_{r}}$ and ${D}_2 \in \mathbb{R}^{(n+p) \times n_{r}}$. 

\begin{remark} \label{rem:mi}
Note that the estimator \eqref{CLS_estimator2} cannot be used to solve Problem \ref{problem1} due to matrix size incompatibility. In \eqref{CLS_estimator2}, the equality constraint \eqref{equality_constraint} is enforced on vector $\hat{\theta}$, whereas in Problem \ref{problem1}, the equality constraint \eqref{eq:ectheta} is enforced on matrix $\hat{\Theta}$, which is right-multiplied by $D_1$.
\end{remark}

The next result rewrites the matrix equations \eqref{eq:ss_lr} and \eqref{eq:ectheta} onto vectorized equations like \eqref{regression_eq} and \eqref{equality_constraint} such that a classical equality-constrained least squares problem is obtained. In so doing, we have a solution for Problem \ref{problem1}.
Likewise, based on this result, the recursive solution can also be obtained for both time-invariant and time-varying cases.

\begin{proposition} \label{prop:cls}
For the linear regression model \eqref{state_1} with parameter $\hat{\Theta}$ given by \eqref{eq:Theta}, 
the parameter estimate $\hat{\Theta}_{\rm CLS}$ minimizes $J_{\rm LS}(\hat{\Theta})$ given by \eqref{JLS_estados_mensurados_matrizes} subject to the equality constraint \eqref{eq:ectheta},  if and only if $\hat{\theta}_{\rm CLS}$ given by \eqref{CLS_estimator2} minimizes $J_{\rm LS}(\hat{\theta})$ given by \eqref{JLS_estados_mensurados} subject to the equality constraint \eqref{equality_constraint} with
\begin{eqnarray}
\label{eq:c1}
&\hat{\theta} = \left[ \begin{array}{c} 
\text{\rm{vec}}(A)\\
 \text{\rm{vec}}(B)
 \end{array} \right], ~
Z = \left[ \begin{array}{c}
\text{\rm{vec}}(y_{1})\\
\text{\rm{vec}}(y_{2})\\
 \vdots\\
\text{\rm{vec}}(y_{N})
 \end{array} \right],&  \\ 
 \label{eq:c2}
 &\Psi = \left[\begin{array}{cc}
                              y_{0}^{T}\otimes I_{n} &  u_{0}^{T}\otimes I_{n} \\
                              y_{1}^{T}\otimes I_{n} &  u_{1}^{T}\otimes I_{n} \\
                               \vdots                         &              \vdots              \\ 
                              y_{N-1}^{T}\otimes I_{n} &  u_{N-1}^{T}\otimes I_{n} 
                              \end{array}
\right],  & \\
\label{eq:c3}
 &\hspace{-0.4cm} D = \left[\begin{array}{cc}
                              I_{n} \otimes S				&  0_{n_{r} n \times np}  \\
                              0_{n_{r} p \times n^2}                      &  I_{p} \otimes S
                              \end{array}
\right] \hspace{-0.1cm},  d = \left[ \begin{array}{c} 
\text{\rm{vec}}(S)\\
\text{\rm{vec}}(0_{n_{r}  \times p})
 \end{array} \right],& \nonumber \\ &&
\end{eqnarray}
where $\theta \in \mathbb{R}^{(n^2+np)}$, $Z \in \mathbb{R}^{Nn}$, $\Psi \in \mathbb{R}^{Nn \times (n^2+np)}$, ${{D}} \in  \mathbb{R}^{(n_{r}n+n_{r}p)\times (n^2+np)}$, and ${{d}} \in \mathbb{R}^{(n_{r}n+n_{r}p)}$. 
\end{proposition}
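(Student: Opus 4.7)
The plan is to apply the Kronecker product identity $\text{vec}(ABC) = (C^T \otimes A)\,\text{vec}(B)$ to both the matricial regression \eqref{eq:ss_lr} and the matricial constraint \eqref{eq:ectheta}, and then to verify by direct comparison that the assignments \eqref{eq:c1}--\eqref{eq:c3} yield a vectorized problem whose minimizer is the column stacking of the optimal $\hat{A}$ and $\hat{B}$.

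For the cost, I would process \eqref{state_1} row by row. Applying the identity to each term of $y_{k+1} = A y_k + B u_k + e_{k+1}$ gives $y_{k+1} = (y_k^T \otimes I_n)\,\text{vec}(A) + (u_k^T \otimes I_n)\,\text{vec}(B) + e_{k+1}$, which is exactly the $k$-th block row of $Z = \Psi \hat\theta + \Xi$ as defined in \eqref{eq:c1}--\eqref{eq:c2}. Setting $\hat\theta = [\text{vec}(A);\text{vec}(B)] = \text{vec}(\hat\Theta^T)$, the stacked residual $Z - \Psi\hat\theta$ equals $\text{vec}\bigl((\mathcal{Y} - \mathcal{X}\hat\Theta)^T\bigr)$, so $J_{\rm LS}(\hat\theta) = \|Z - \Psi\hat\theta\|_2^2 = \|\mathcal{Y} - \mathcal{X}\hat\Theta\|_F^2 = \text{tr}\, J_{\rm LS}(\hat\Theta)$. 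Since the standard multivariate LS estimate minimizes $J_{\rm LS}(\hat\Theta)$ column by column (equivalently, minimizes its trace), the two minimization problems share the same minimizer under the bijective correspondence $\hat\theta \leftrightarrow \hat\Theta$.

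For the constraint, I would first unwind \eqref{eq:ectheta} with $D_1$ and $D_2$ as in \eqref{constraint} to recover the pair $SA = S$ and $SB = 0_{n_r \times p}$. Vectorizing each via $\text{vec}(SA) = (I_n \otimes S)\,\text{vec}(A)$ and $\text{vec}(SB) = (I_p \otimes S)\,\text{vec}(B)$, with right-hand sides $\text{vec}(S)$ and $\text{vec}(0_{n_r \times p})$, and stacking block-diagonally, produces precisely $D\hat\theta = d$ with $D$ and $d$ as in \eqref{eq:c3}. The block sizes check out since $I_n \otimes S$ is $n_r n \times n^2$ and $I_p \otimes S$ is $n_r p \times n p$, matching the $(n_r n + n_r p) \times (n^2 + np)$ shape of $D$. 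Because $\text{vec}$ is a bijection and the reshaping $\hat\theta \leftrightarrow (\hat A,\hat B)$ is invertible, each implication reverses, delivering the ``if and only if''.

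I expect the main obstacle to be organizational bookkeeping rather than any deep step: keeping straight whether one vectorizes $\Theta$ or $\Theta^T$, matching the row-block ordering in $\Psi$ against the stacking of $Z$, and tracking the Kronecker ordering in \eqref{eq:c3} (namely $I_n \otimes S$ rather than $S \otimes I_n$, which depends on whether the matrix to which $S$ is applied sits on the left or the right in the matricial constraint). Once the identification $\hat\theta = \text{vec}(\hat\Theta^T)$ is fixed, all remaining manipulations are routine applications of the Kronecker identity.
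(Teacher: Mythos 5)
Your proposal is correct and follows essentially the same route as the paper: apply the identity $\mathrm{vec}(\mathcal{M}\mathcal{N}\mathcal{O})=(\mathcal{O}^{T}\otimes\mathcal{M})\mathrm{vec}(\mathcal{N})$ to the regression equation row by row to obtain $Z=\Psi\hat{\theta}+\Xi$, and to the constraints $SA=S$, $SB=0_{n_r\times p}$ to obtain the block-diagonal $D$ and stacked $d$. Your added observations --- that $\hat{\theta}=\mathrm{vec}(\hat{\Theta}^{T})$ and that the scalar cost equals the trace of the matricial cost \eqref{JLS_estados_mensurados_matrizes} --- make the equivalence of the two minimizations slightly more explicit than the paper's statement that the cost functions are ``equivalent,'' but the argument is the same.
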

\begin{proof}
This proof has two parts. First, we rewrite \eqref{eq:ss_lr} as \eqref{regression_eq}. This is done by using the following relation \citep{Bernstein2005}
\begin{equation}\label{vec_operator}
\textrm{vec}(\mathcal{M}\mathcal{N}\mathcal{O})=(\mathcal{O}^{T}\otimes \mathcal{M})\textrm{vec}(\mathcal{N}),
\end{equation}
where $\mathcal{M}$, $\mathcal{N}$ and $\mathcal{O}$ are real matrices of appropriate size.

The deterministic part of the model \eqref{state_1} can be rewritten as  
\begin{equation}
\text{\rm{vec}}(y_{k+1})=\left[\begin{array}{cc}
                              y_{k}^{T}\otimes I_{n} &  u_{k}^{T}\otimes I_{n} 
                              \end{array}
\right]  \left[ \begin{array}{c} 
\text{\rm{vec}}(A)\\
 \text{\rm{vec}}(B)
 \end{array} \right], \nonumber
\end{equation}
which is the form of \eqref{regression_eq}.

Second, we rewrite \eqref{eq:ectheta} as \eqref{equality_constraint}. For a data set of size $N$,  we have 
\begin{equation}
\left[ \begin{array}{c} 
\text{\rm{vec}}(y_{1})\\
\text{\rm{vec}}(y_{2})\\
 \vdots\\
\text{\rm{vec}}(y_{N})
 \end{array} \right] = \left[\begin{array}{cc}
                              y_{0}^{T}\otimes I_{n} &  u_{0}^{T}\otimes I_{n} \\
                              y_{1}^{T}\otimes I_{n} &  u_{1}^{T}\otimes I_{n} \\
                               \vdots                         &              \vdots              \\ 
                              y_{N-1}^{T}\otimes I_{n} &  u_{N-1}^{T}\otimes I_{n} 
                              \end{array}
\right] \left[ \begin{array}{c} 
\text{\rm{vec}}(A)\\
 \text{\rm{vec}}(B)
 \end{array} \right], \nonumber
\end{equation}
which is an equation of the form \eqref{regression_eq}, that is, $Z = \Psi\theta$, with $Z$ and $\Psi$ set as in \eqref{eq:c1}-\eqref{eq:c2}.

Thus,  the cost function $J_{\rm LS}(\hat{\theta})$ given by \eqref{JLS_estados_mensurados} is equivalent to the cost function $J_{\rm LS}(\hat{\Theta})$ given by \eqref{JLS_estados_mensurados_matrizes}.

Likewise, the equality constraint \eqref{eq:ectheta} can be rewritten as
\begin{equation}
\left[\begin{array}{cc}
                      I_{n} \otimes S				&  0_{n_{r} n \times np}  \\
                              0_{n_{r} p \times n^2}                      &  I_{p} \otimes S
                              \end{array}
\right] \left[ \begin{array}{c} 
\text{\rm{vec}}(A)\\
\text{\rm{vec}}(B)
 \end{array} \right] 
=\left[ \begin{array}{c} 
\text{\rm{vec}}(S)\\
\text{\rm{vec}}(0_{n_{r} \times p})
 \end{array} \right], \nonumber
\end{equation}
which is an equation of the form \eqref{equality_constraint}, with $D$ and $d$ set as in \eqref{eq:c3}.
\end{proof}

\begin{corollary} \label{cor:cls}
Given the vectorizing definitions \eqref{eq:c1}-\eqref{eq:c3},
the recursive equality-constrained parameter estimates $\hat{\Theta}_{{\rm CLS},k}$ are given by \eqref{eq:rls_k}-\eqref{eq:rls_init2}. 
\end{corollary}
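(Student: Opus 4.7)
The plan is to reduce the matrix-form recursive problem to the scalar-form one already solved by Proposition \ref{prop:rcls}. Proposition \ref{prop:cls} establishes that the batch minimizer $\hat{\Theta}_{\rm CLS}$ of $J_{\rm LS}(\hat{\Theta})$ under \eqref{eq:ectheta} is in one-to-one correspondence, via the vectorization map, with the batch minimizer $\hat{\theta}_{\rm CLS}$ of $J_{\rm LS}(\hat{\theta})$ under the linear constraint $D\hat{\theta}=d$ with $(Z,\Psi,D,d)$ defined in \eqref{eq:c1}-\eqref{eq:c3}. Since Proposition \ref{prop:rcls} already provides a recursion \eqref{eq:rls_k}-\eqref{eq:rls_init2} that produces exactly this $\hat{\theta}_{\rm CLS}$ while preserving $D\hat{\theta}_{{\rm CLS},k}=d$ at every step, it suffices to transport that recursion back through the vectorization map.

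Concretely, I would proceed as follows. First, partition the stacked data in \eqref{eq:c1}-\eqref{eq:c2} so that at time $k$ the natural observation/regressor block is $z_k \triangleq \text{vec}(y_k)\in\mathbb{R}^n$ and $\psi_{k-1}^T \triangleq \bigl[y_{k-1}^T\otimes I_n,\; u_{k-1}^T\otimes I_n\bigr]\in\mathbb{R}^{n\times(n^2+np)}$, which exhibits the linear-regression form \eqref{regression_eq}. Second, instantiate the RLS recursion of Proposition \ref{prop:rcls} with these blocks and with $(D,d)$ from \eqref{eq:c3}, thereby obtaining \eqref{eq:rls_k}-\eqref{eq:rls_P} for $\hat{\theta}_{{\rm CLS},k}$ together with the constraint-compatible initialization \eqref{eq:rls_init1}-\eqref{eq:rls_init2} built from the projector $\mathcal{P}_{\mathcal{N}(D)}$. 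Third, invoke the invertibility of the vectorization map: the sequence $\hat{\theta}_{{\rm CLS},k}$ corresponds bijectively to $\hat{\Theta}_{{\rm CLS},k}$ via \eqref{eq:c1}, and the scalar-form invariance $D\hat{\theta}_{{\rm CLS},k}=d$ is, by construction of $D$ and $d$ from $S$, exactly the matrix-form invariance $\hat{\Theta}_{{\rm CLS},k}D_1=D_2$. By Lemma \ref{lemma:ec}, this in turn guarantees that the free-run state trajectory generated by $\hat{\Theta}_{{\rm CLS},k}$ satisfies $Sx_j=s$ for all $j\geq 1$.

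The only subtlety I expect is that Proposition \ref{prop:rcls} is stated for scalar observations, whereas here $z_k=\text{vec}(y_k)\in\mathbb{R}^n$ is vector-valued. This is handled in the standard way, either by processing the $n$ components of $\text{vec}(y_k)$ sequentially as $n$ scalar RLS updates per time step, or by promoting \eqref{eq:rls_k} to its block form with the scalar denominator $\psi_{k-1}^T P\psi_{k-1}+1$ replaced by the matrix $(\Psi_{k-1}P\Psi_{k-1}^T+I_n)$; either route leaves the projector-based initialization \eqref{eq:rls_init1}-\eqref{eq:rls_init2} untouched, and Proposition \ref{prop:rcls}'s invariance argument extends verbatim because $D$ does not depend on $k$. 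No new ideas beyond the vectorization equivalence of Proposition \ref{prop:cls} are required.
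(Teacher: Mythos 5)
Your argument is correct and follows the same route as the paper's own (one-line) proof: combine the vectorization equivalence of Proposition \ref{prop:cls} with the constraint-preserving recursion of Proposition \ref{prop:rcls}. You are in fact more careful than the paper in flagging that $z_k=\mathrm{vec}(y_k)\in\mathbb{R}^n$ is vector-valued while Proposition \ref{prop:rcls} is stated for scalar observations; the paper silently assumes the block (or sequential-scalar) extension that you spell out.
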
 

\begin{proof}
Given the result of Proposition \ref{prop:cls} and that  the recursive equations \eqref{eq:rls_k}-\eqref{eq:rls_init2} are equivalent to the equality constrained least squares  \eqref{CLS_estimator2} (see Proposition \ref{prop:rcls}), we prove this result.  
\end{proof}

\begin{remark} For the time-varying counterpart of \eqref{eq:proc}-\eqref{restricoes_igualdade}, using definitions similar to \eqref{eq:c1}-\eqref{eq:c3}, the recursive equality-constrained parameter estimates $\hat{\Theta}_{{\rm CLS},k}$ can be obtained from \eqref{eq:rls_ff_k}-\eqref{eq:rls_ff_theta_CRLS}.
\end{remark}

Note that \eqref{eq:proc}-\eqref{eq:obs} characterizes an output-error model. So, the next result proves that the LS estimator is biased for such type of model. \begin{proposition} \label{prop:els}
For the state-space model \eqref{eq:proc}-\eqref{eq:obs}, the LS estimator given by \eqref{LS_estimator} and \eqref{eq:c1}-\eqref{eq:c2} is biased, that is, $\rm{E}[\hat{\theta} ] - \theta \neq 0$.  
 \end{proposition}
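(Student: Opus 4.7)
The plan is to expose a non-vanishing correlation between the regressor $\Psi$ and the equation error, which is the classical reason least squares fails for output-error models. Writing $Z = \Psi\theta + \mathcal{E}_{\rm vec}$ (with $\mathcal{E}_{\rm vec}$ the stacked/vectorized form of the sequence $\{e_k\}$ given by \eqref{eq:noise}) and substituting into \eqref{LS_estimator} under the identifications \eqref{eq:c1}-\eqref{eq:c2} yields the standard error representation
\[
\hat{\theta}_{\rm LS} - \theta = (\Psi^T\Psi)^{-1}\Psi^T\mathcal{E}_{\rm vec}.
\]
Proving bias reduces to showing that the mean of the right-hand side does not vanish.

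The key observation is that the block-row of $\Psi$ at time $k$ is built from $y_k$, which by \eqref{eq:obs} equals $x_k + v_k$, while the corresponding entry of the error vector is $e_{k+1}=v_{k+1}-Av_k+Gw_k$. Both depend on the same measurement noise $v_k$. Using the standing assumptions (zero-mean, mutually uncorrelated, white $\{v_k\}$ and $\{w_k\}$, independent of $\{x_k\},\{u_k\}$), a direct calculation gives
\[
E[y_k\, e_{k+1}^T] = E[v_k(-Av_k)^T] = -R A^T, \qquad R \triangleq E[v_k v_k^T],
\]
which is nonzero whenever $R\neq 0$ and $A\neq 0$. Consequently $E[\Psi^T\mathcal{E}_{\rm vec}]\neq 0$.

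To convert this cross-correlation into a statement about $E[\hat\theta_{\rm LS}]$, I would invoke an asymptotic argument. Under the asymptotic stability hypothesis on \eqref{eq:proc} and persistent excitation of $u_k$, the joint process $(y_k,u_k)$ is ergodic, so by the law of large numbers $\tfrac{1}{N}\Psi^T\Psi \to \Phi$ and $\tfrac{1}{N}\Psi^T\mathcal{E}_{\rm vec} \to \phi$ almost surely, where $\phi$ inherits the nonzero correlation computed above (in particular, the blocks of $\phi$ stemming from $\mathrm{vec}(A)$ contain a contribution proportional to $-RA^T$). Hence $\hat{\theta}_{\rm LS}\to \theta + \Phi^{-1}\phi \neq \theta$, establishing inconsistency and therefore the existence of a non-vanishing (asymptotic) bias.

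The main obstacle is precisely this last step: the factor $(\Psi^T\Psi)^{-1}$ is noise-dependent, so $E[(\Psi^T\Psi)^{-1}\Psi^T\mathcal{E}_{\rm vec}]$ does not factor and admits no tractable closed form at finite $N$. Passing to almost-sure limits of the normalized normal equations is the cleanest way to sidestep this analytical difficulty while still proving that the LS estimator does not recover $\theta$; any purely finite-sample argument would need an explicit distributional assumption (e.g.\ Gaussianity together with a symmetry argument) that is stronger than what is assumed in Section~\ref{sec:ps}.
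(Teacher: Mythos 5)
Your argument is correct and follows essentially the same route as the paper's Appendix~B: both proofs rest on the observation that the regressor $y_k$ and the equation error $e_{k+1}=v_{k+1}-Av_k+Gw_k$ share the measurement noise $v_k$, so the regressor is correlated with the error and the LS estimator is therefore biased. You additionally compute the cross-covariance $-RA^T$ explicitly and supply the asymptotic (ergodicity) argument needed to turn that correlation into a statement about $\hat\theta_{\rm LS}$ itself, a step the paper's proof leaves implicit.
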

\begin{proof}
See Appendix \ref{appendixB}.
\end{proof}
Thus, in this work, we use algorithms based on the extended LS \citep{Ljung1987}; however, for brevity, we omit the term ``extended''. Other unbiased estimators could be used instead.
\vspace{-0.4cm}
\section{Simulated Results}\label{sec:SR}
\subsection{Compartmental system: time-invariant case}
Consider the linear discrete-time compartmental model  \citep{Teixeira2009} 
represented by \eqref{eq:proc}-\eqref{eq:obs} involving mass exchange among compartments whose matrices are given by

\begin{equation}\label{comp_syst_1}
A=\left[\begin{array}{ccc}
                              0.94 &  0.028 & 0.019 \\
                              0.038 &  0.95 & 0.001 \\
                               0.022 &  0.022 & 0.98 
                              \end{array}\right]; ~B= 0_{3 \times 1}; ~~ C=I_{3 \times 3}; 
                               \end{equation}
with state vector $x_k$ $\in$ $\mathbb{R}^3$ composed by the amount of mass in each compartment, initial condition $x_{0}$ $=$ $[1~~ 1~~ 1]^{T}$ , and process noise and observation noise covariance matrices \hbox{$\tilde{Q}$ $=$ $\sigma_{w}^2 G G^T$}, where $G=\left[\begin{array}{cc}
                              0.05 &  -0.03  \\
                              -0.02 & 0.01  \\
                              -0.03 & 0.02  
                              \end{array}\right]$, and $R_{k} = \sigma_{v}^2 I_{2 \times 2}$. 
                              
\vspace{0.2cm}                             
                              One realization of simulated identification data for this system is shown in Fig. \ref{fig:3dstates} for $\sigma_{w}$ $=$ $1.0$ and  \hbox{$\sigma_{v}$ $=$ $0.1$}. 
 Note that conditions of  Lemmas \ref{lemma:ec} and \ref{lemma:ec2} hold for \eqref{comp_syst_1} such that the trajectory of $x_{k}$ $\in$ $\mathbb{R}^{3}$ lies on the plane \eqref{restricoes_igualdade}, whose parameters are assumed to be known and are given by
 \begin{equation}\label{eq:compartmental_constraints}
 S=[1~~1~~1],~~s=3,
 \end{equation}
 that is, mass conservation  is verified. The validation data is simulated with different initial condition $x_{0}$ $=$ $[2~~1 ~~ 0]^{T}$. 

  \begin{figure}[]
  \centering
    \includegraphics[scale=0.75]{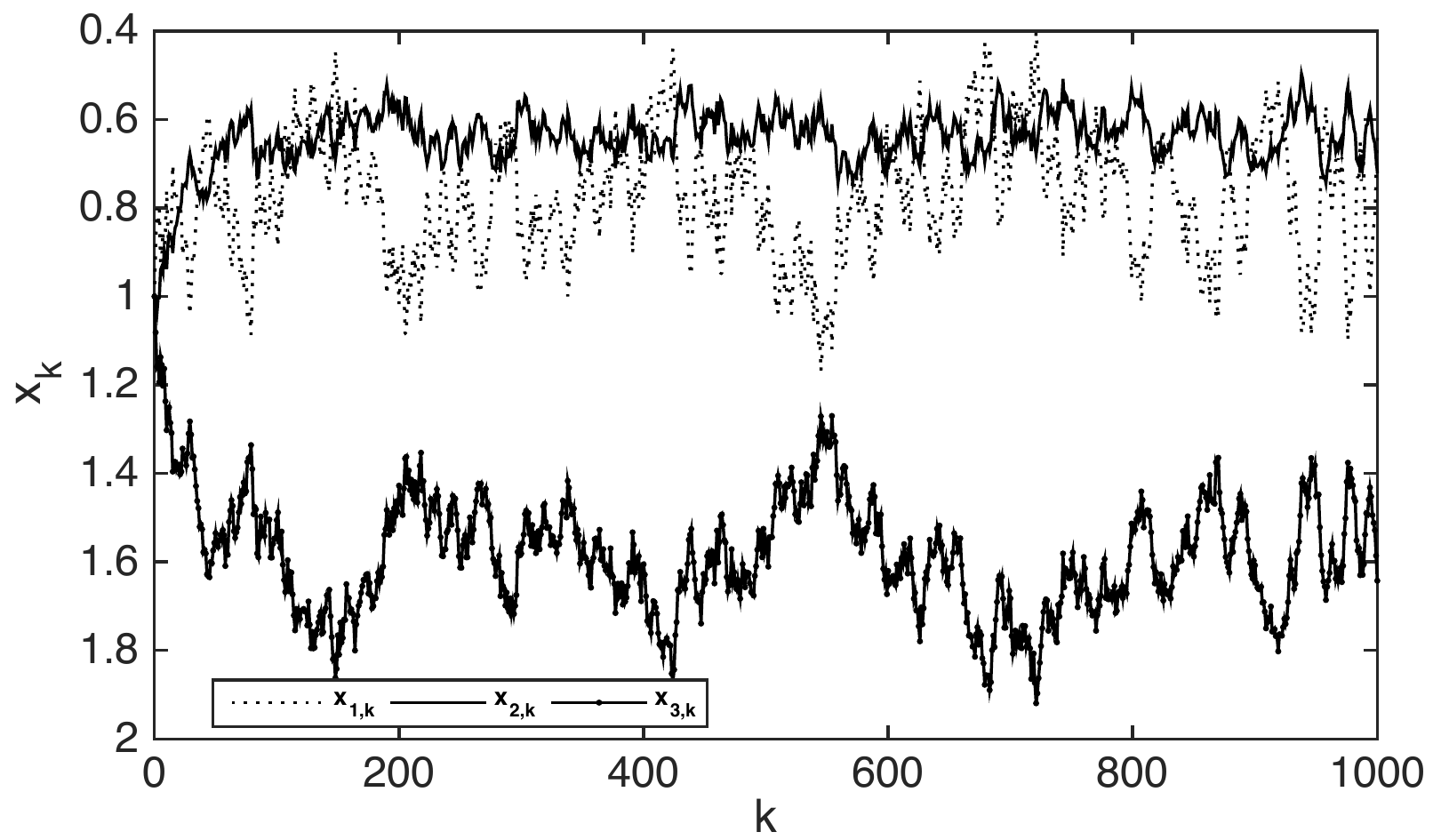}
      \caption{One realization of identification data from the compartmental model. The state components are shown evolving with time.}
      \label{fig:3dstates}
\end{figure}

We investigate a 1000-run Monte Carlo simulation testing identification data with different noise realizations for $\sigma_{w}$ $=$ $1.0$ and  \hbox{$\sigma_{v}$ $=$ $0.1$}. The LS given by \eqref{LS_estimator} for the state-space system, for which \eqref{eq:ss_lr}-\eqref{JLS_estados_mensurados_matrizes} are defined, is used to yield estimates for the matrix $A$ as discussed in Section \ref{sec:ps}. Likewise, as indicated by Proposition \ref{prop:cls}, CLS was also employed. 

Fig. \ref{fig:freerun_compart_system} shows the results regarding the Monte Carlo validation of the obtained models (mean values with two standard-devation confidence interval). In order to quantify the fit between the simulation of the system and the identified models, we use the root-mean-square error  for each  $i$th state component, $i=1,\ldots,n$,  
\begin{equation}
\textrm{RMSE}^{m_{r}}_{n}\triangleq\sqrt{\frac{\sum^{N}_{k=1}(x_{i,k}-\hat{x}_{i,k})^{2}}{N}},
\end{equation}
where  $N$ is the length of the measured data and \hbox{$m_{r}=1,\ldots,M$}, where $M$ is the number of realizations. Table \ref{table1} shows the mean $\overline{\textrm{RMSE}}_{n}$ and standard deviation $\sigma_{\rm{RMSE}_{n}}$  of the RMSE for each state.
Note that the performance of the model obtained with CLS is better than the model estimated by LS. 
That is, the auxiliary information about mass conservation was useful.  

\begin{figure}[]
  \centering
    \includegraphics[scale=0.75]{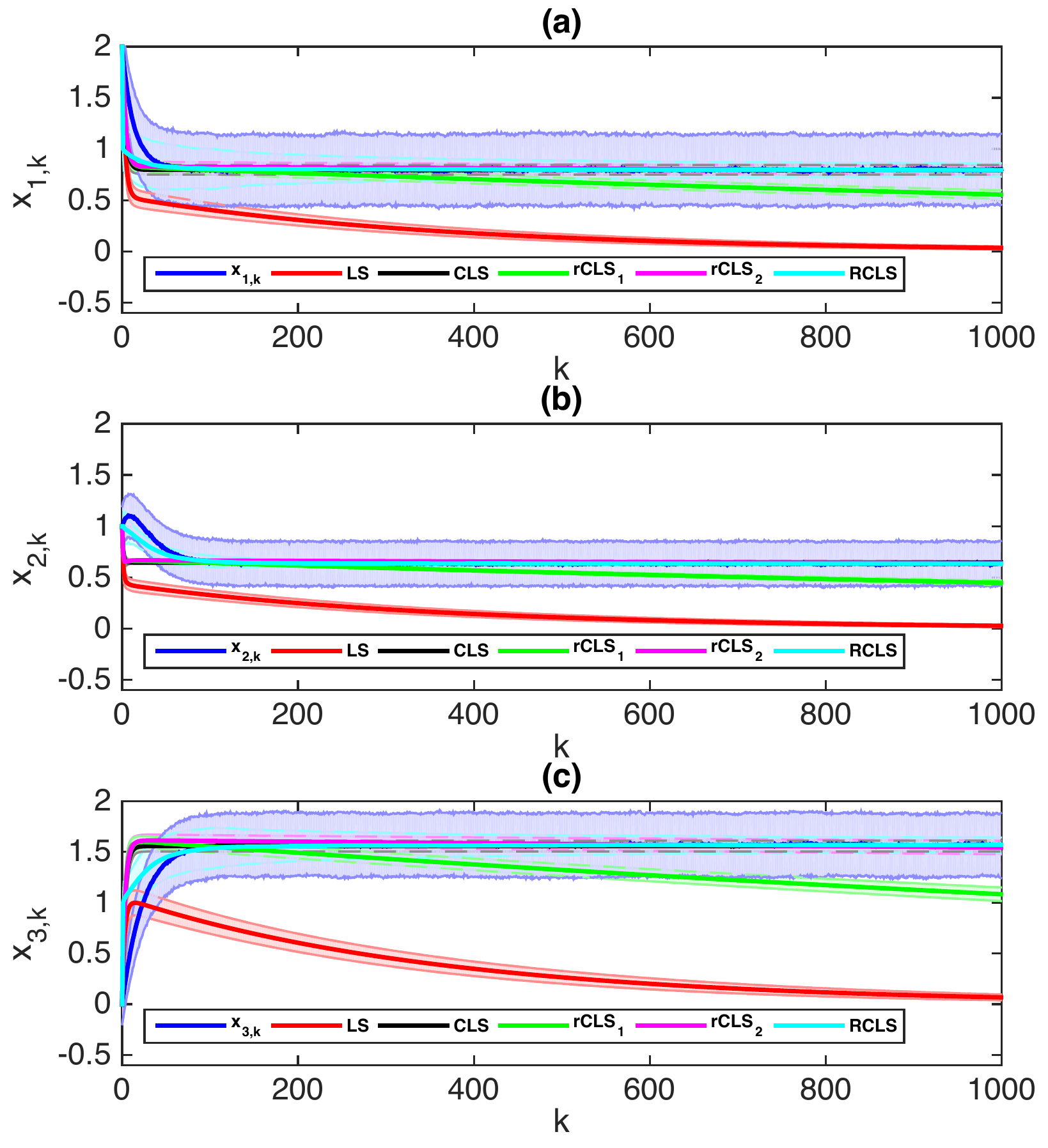}
    \caption{Comparison of $1000$-run Monte Carlo simulations for the validation of   state-space models identified using LS (red line), CLS (black line) and RCLS (cyan line). To address the case of uncertain prior information, we estimate models using rCLS, for which we test two values of the tuning parameter: $\mu_{1}=5\times10^{3}$ (green line) and $\mu_{2}=5\times10^{4}$ (magenta line). The mean of true values of the states are shown in a blue continuous line. In all cases, the mean values of the validation data are shown within the confidence interval of two standard deviations given by the respective light shaded  areas. \label{fig:freerun_compart_system}}
\end{figure}

In addition, we consider the case where the auxiliary information is uncertain. Suppose that the uncertain state equality constraint is assumed to be given by  \eqref{restricoes_igualdade} with
\begin{equation}\label{eq:compartmental_constraints_uncertain}
 S =[1.4~~0.9~~1.2],~~s =3.5. 
 \end{equation}
 The rCLS given by \eqref{CLS_relaxed} is used to estimate the state-space model with the uncertain auxiliary information \eqref{eq:compartmental_constraints_uncertain}. As in \citet{arablouei2015}, we tuned the parameter $\mu$ in order to obtain models with good prediction performance.  The results are shown in \linebreak Fig. \ref{fig:freerun_compart_system}  and  Table \ref{table1} for $\mu_{1}=5\times10^{3}$  ($\rm{rCLS}_{1}$) and $\mu_{2}=5\times10^{4}$ ($\rm{rCLS}_{2}$). Note that the results yielded by ($\rm{rCLS}_{1}$) are better than those from LS. Moreover,  results from $\rm{rCLS}_{2}$ almost coincide to those from  CLS. Then, the appropriate use of  uncertain prior information may improve the quality of the estimated model, as discussed in  \citet{2011_Teixeira_Aguirre}.

We also test the recursive solution to this problem as indicated by Corollary \ref{cor:cls}.
 RLS is properly initialized as in \eqref{eq:rls_init1}-\eqref{eq:rls_init2}, with $\theta_{0}$ $=$ $0_{9 \times 1}$ and $P_{0}$ $=$ $10^3I_{9 \times 9}$, yielding RCLS and is compared with the batch LS and CLS estimates in Fig. \ref{fig:freerun_compart_system} and  Table \ref{table1}.  Note that, when compared to LS, the use of auxiliary information \eqref{eq:compartmental_constraints} in the initialization of  RCLS improves the performance of the estimated model. 
\begin{table}
\tbl{The mean and the standard deviation of the RMSE for 1000-run Monte Carlo simulations of each state sequence.}
{\begin{tabular}{p{34pt}p{67pt}p{67pt}p{67pt}}
\hline
Method & $\hfil \! \overline{\textrm{RMSE}}_{1}\!\! \pm \!\sigma_{\rm{RMSE}_{1}}$ & $\hfil \! \overline{\textrm{RMSE}}_{2}\!\! \pm \!\sigma_{\rm{RMSE}_{2}}$ & $\hfil \! \overline{\textrm{RMSE}}_{3}\!\! \pm \!\sigma_{\rm{RMSE}_{3}}$    \\ \hline
LS &   			\hfil 0.660   $\pm$  0.029   & \hfil 0.527 $\pm$ 0.013 & \hfil 1.248 $\pm$ 0.038 \\ 
CLS &  			\hfil 0.191   $\pm $ 0.012   & \hfil 0.134 $\pm$ 0.003 & \hfil 0.211 $\pm$ 0.014 \\ 
$\rm{rCLS_{1}}$ &    \hfil 0.232   $\pm$  0.020   & \hfil 0.169 $\pm$ 0.007 & \hfil 0.356 $\pm$ 0.031 \\ 
$\rm{rCLS_{2}}$ &    \hfil 0.189   $\pm$  0.012  &  \hfil 0.133 $\pm$  0.003& \hfil 0.219 $\pm$ 0.016 \\ 
RCLS & 			 \hfil 0.202  $\pm$  0.013   & \hfil 0.116 $\pm$ 0.003&  \hfil 0.196 $\pm$ 0.015\\  \hline
\end{tabular}}
\label{table1}
\end{table}
\vspace{-0.5cm}

\subsection{Compartmental system: time-varying case}
We now consider a time-varying compartmental system.
As in a reconfigurable system, we consider the case in which the linear dynamics switches among  three different modes. 
For instance, this may be the case for a multi-tank system with reconfigurable valves.
The first mode is simulated with $A_1$ as in \eqref{comp_syst_1}. The second and third modes are described by the matrices 
 $$A_{2}=\left[\begin{array}{ccc} 
                              0.84 &  0.028 & 0.019 \\
                              0.138 &  0.85 & 0.001 \\
                               0.022 &  0.122 & 0.98 
                              \end{array}\right],$$ 
$$A_{3}=\left[\begin{array}{ccc} 
                              0.80 &  0.018 & 0.119 \\
                              0.178 &  0.76 & 0.201 \\
                               0.022 &  0.222 & 0.68 
                              \end{array}\right].$$                       
The matrices $B$ and $C$ are defined as in \eqref{comp_syst_1} for all modes.

\begin{figure}[]
  \centering
    \includegraphics[scale=0.75]{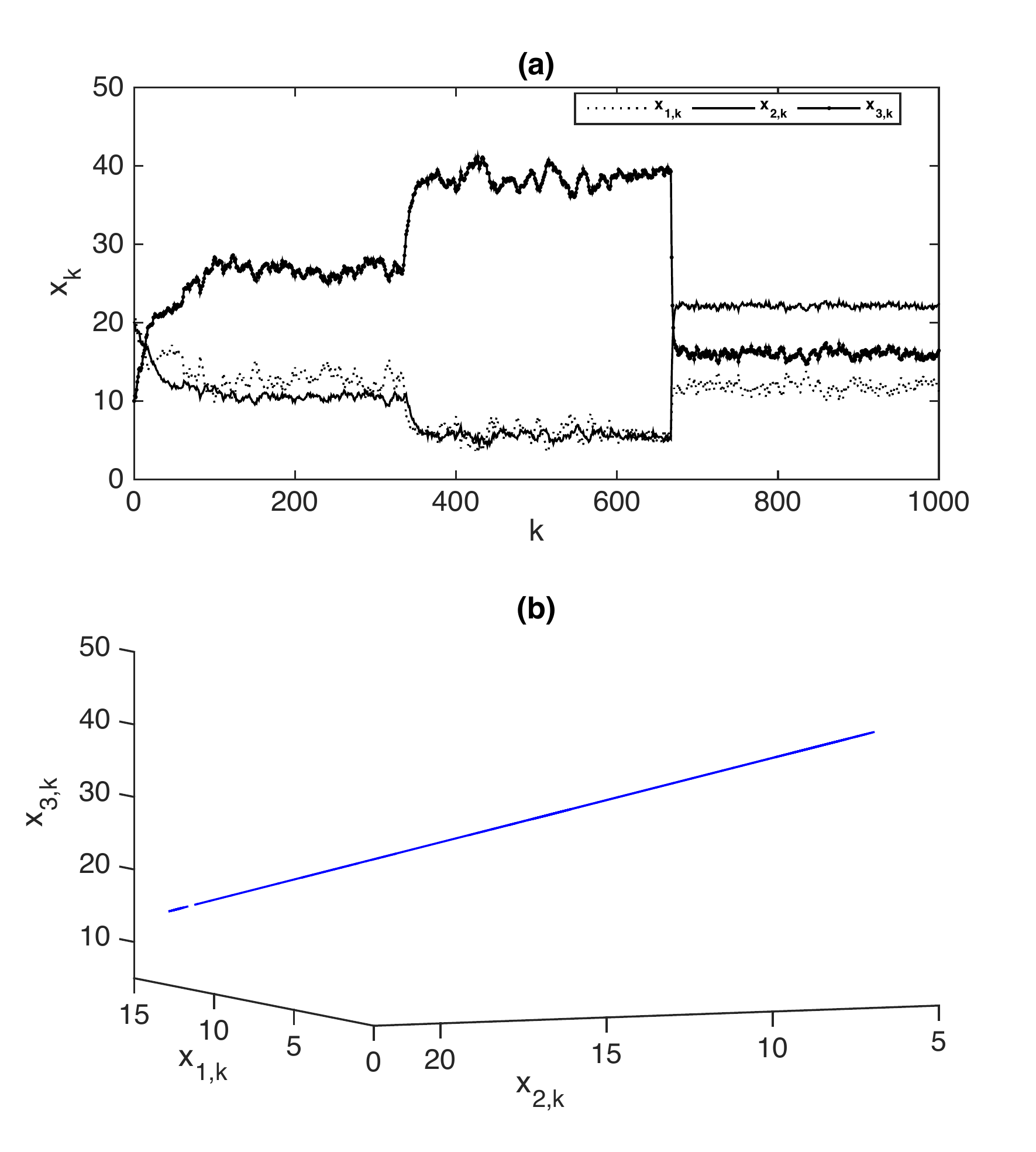}
      \caption{One realization of the identification data for the time-varying compartmental model. In (a), the state components are shown evolving with time and, in (b), in state space. Note that, regardless of the mode, for all $k$ $\geq$ $0$, $x_k$ $\in$ $\mathbb{R}$ lies on the plane $x_{1,k}$ $+$ $x_{2,k}$ $+$ $x_{3,k}$ $=$ $50$.} 
      \label{fig:3dstates_3OP}
\end{figure}

A typical realization of the simulated identification data is shown in Fig. \ref{fig:3dstates_3OP} for $\sigma_{w}$ $=$ $10$ and  \hbox{$\sigma_{v}$ $=$ $1$} and \hbox{$x_{0}$ $=$ $[20~~20~~10]^{T}$}. The mass conservation is verified for all operating points and the assumedly known parameters of   \eqref{restricoes_igualdade} are given by 
\begin{equation}\label{eq:compartmental_constraints_2}
 S=[1~~1~~1],~~s=50.
 \end{equation} 
 Note that the conditions of  Lemmas \ref{lemma:ec} and \ref{lemma:ec2} are verified for all modes. 
 A new initial condition is set as \hbox{$x_{0}$ $=$ $[15~~10~~25]^{T}$} to simulate the validation data. 

We generate $1000$ Monte Carlo simulations with different noise realizations for $\sigma_{w}$ $=$ $10$ and  \hbox{$\sigma_{v}$ $=$ $1$} in order to obtain the identification data. For each running simulation we employ both RWLS given by \eqref{eq:rls_ff_k}-\eqref{eq:rls_ff_P_k} and RWCLS given by \eqref{eq:rls_ff_k}-\eqref{eq:rls_ff_theta_CRLS} to estimate the time-varying model.  These recursive estimators are randomly initialized with an arbitrary initial condition  ${\rm vec}(\Theta_{0}) \in \mathbb{R}^{{9 \times 1}}$ given by a normal distribution with $\sigma_{\Theta}=1$ and $P_{0}=10^{4}I_{9 \times 9}$. Recall that in \citet{Alenany2013}  part of the identification data is used to estimate de initial conditions by means of the batch algorithm \eqref{CLS_estimator2}; see Remark \ref{rem:initialization_ff}. Here, we use the result given by \eqref{eq:rls_init1}-\eqref{eq:rls_init2} to more conveniently proceed the identification procedure using the RWCLS. 
The forgetting factor $\lambda$  is set to $0.95$. 

The Monte Carlo validation results for RWLS and RWCLS  are shown in Fig. \ref{fig:freerun_compart_system_RLS_ff2}. We would like to draw attention to the variance of the estimated models. For the three different modes, we verify that the performance of the model obtained with RWCLS is better than the model estimated by RWLS. That is, the prior information about mass conservation improved the quality of the estimated model.  
 \vspace{-0.3cm}
 
 \begin{figure}[]
  \centering
    \includegraphics[scale=0.76]{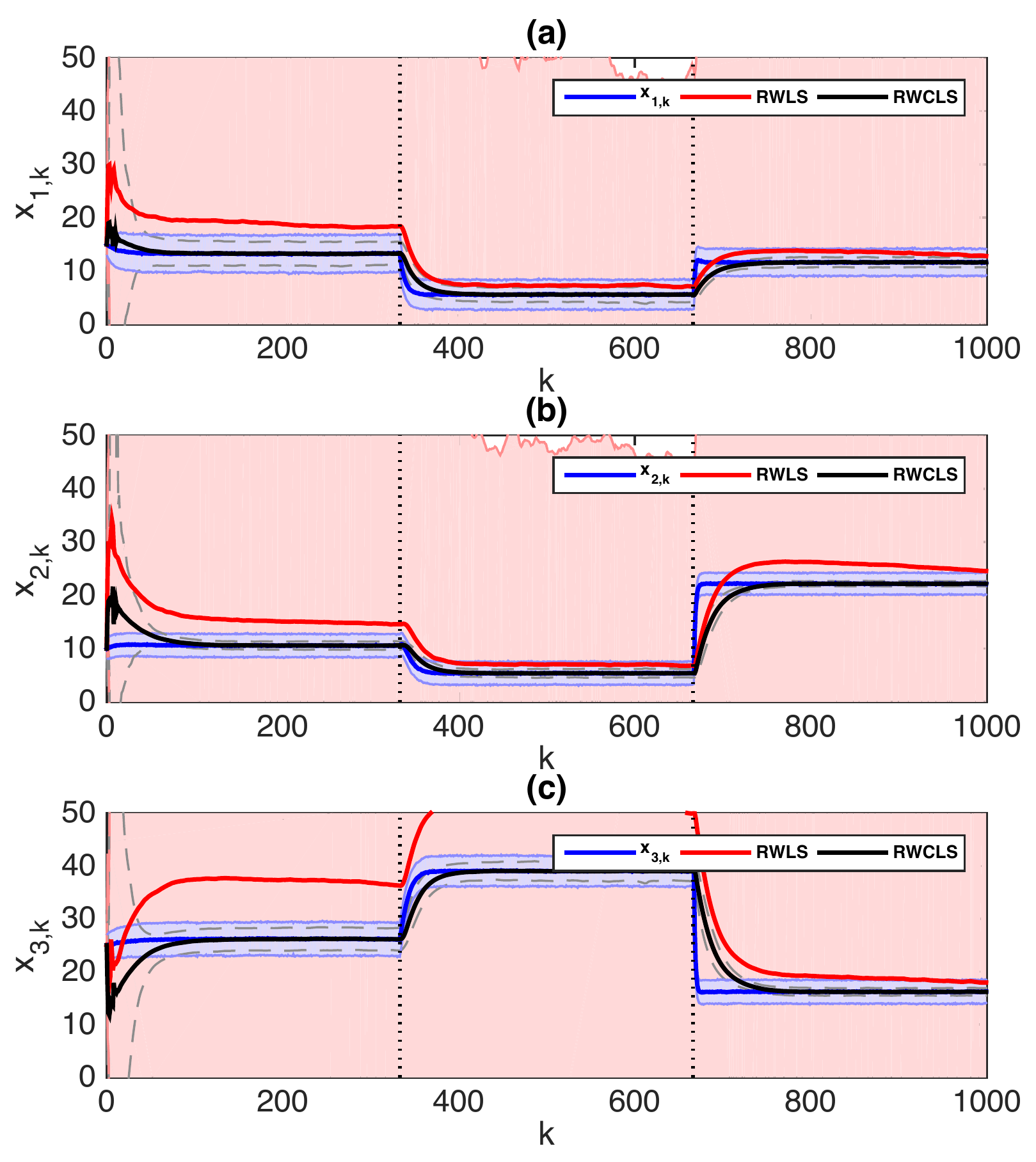}
\caption{Comparison of $1000$-run Monte Carlo simulations for the validation of   state-space models identified using RWLS (red line) and RWCLS (black line), where the RWCLS is initialized in according to \eqref{eq:rls_init1}-\eqref{eq:rls_init2} and the constraint is enforced with \eqref{eq:rls_ff_L_k}-\eqref{eq:rls_ff_theta_CRLS}. The mean of true values of the states are shown in a blue continuous line. In all cases, the mean values of the validation data are shown within the confidence interval of two standard deviations given by the respective light shaded  areas. Observe that the confidence interval of RWLS are partially presented. Dotted vertical black lines show when the operating points are switched.}      \label{fig:freerun_compart_system_RLS_ff2}

\end{figure}

 \subsection{Practical application: forest ecosystem}
Consider the compartmental model of nitrogen flow in a tropical forest studied in \citet{Walter1999} and summarized in Figure \ref{fig:figura_artigo_forest_ecosystem}, where $k_{ij}$ accounts for the flow rates between compartments, associated with the mass leaving the $i$th compartment and arriving at the $j$th compartment. Note that each compartment allows interaction from its neighbour compartments in both directions and that the parameters are given for a continuous-time model. Choosing the sampling-period as $T_{s}=0.1$ years, the discrete-time matrices are given by
\begin{eqnarray}\label{comp_syst_3}
A\!\!&\!\!=\!\!&\!\!\left[\begin{array}{ccccc} \nonumber
                              0.9003 &  0          & 0.0005 & 0         &  0.0093 \\
                              0.0935 &  0.8807  & 0         & 0         & 0.0005 \\
                              0.0054 &  0.0978 & 0.6697 & 0          & 0 \\
                              0.0005 &  0.0154 & 0.2372 & 0.9995 & 0 \\
                              0.0002 &  0.0060 & 0.0927 & 0.005  &0.9902
                              \end{array}\right], \\
                              B\!\!&\!\!=\!\!&\!\! \left[\begin{array}{ccccc} \nonumber
                              0.5505\!&\!0.0282\!&\!-0.2625\!&\!-0.3003\!&\! -0.0159
                              \end{array}\right]^{T},\\
                              C\!\!&\!\!=\!\!&\!\!I_{5 \times 5},  \label{continuous_model_ecosystem} \end{eqnarray}
with state vector $x_k$ $\in$ $\mathbb{R}^5$ composed by the amount of nitrogen in each compartment, initial condition \hbox{$x_{0}\!\!=\!\![\!-\!3.5~\!-\!2.52~~0~~520~~26.5]^{T}\!\!+\!\!\mathcal{K}[3.82~~316~~1~~576~~41\!]^{T}$}, where $\mathcal{K}$ depends on the total amount of nitrogen in the system in the beginning and is set as $\mathcal{K}=1.5$, 
 the process noise and observation noise covariance matrices \hbox{$\tilde{Q}$ $=$ $\sigma_{w}^2 G G^T$}, where $G=\left[\begin{array}{cccc}
                              0.1220 &  0.1634 &0.0249 &  -0.0383  \\
                              -0.0420   &-0.0048  &-0.1430  & 0.0235  \\
                               0.1640 & -0.0317  &-0.0057 &  0.0571 \\
                                  -0.1871 &  -0.0877  &0.1697  & -0.0098  \\
                                   -0.0569 & -0.0392  &-0.0459  &  -0.0325  
                              \end{array}\right]$, and $R_{k} = \sigma_{v}^2 I_{4 \times 4}$. 
The conditions of  Lemmas \ref{lemma:ec} and \ref{lemma:ec2} hold for \eqref{comp_syst_3} such that the trajectory of $x_{k}$ $\in$ $\mathbb{R}^{5}$ lies on the plane \eqref{restricoes_igualdade}, whose parameters are assumed to be known and are given by
 \begin{equation}\label{eq:compartmental_constraints_ex_3}
 S=[1~~1~~1~~1~~1],~~s=1.9472 \times 10^{3}.
 \end{equation}
 that is, mass conservation  is verified.
 
We investigate a 1000-run Monte Carlo simulation testing identification data with $u_{k}=1+\sigma_{u}w_{k}^u$ and different noise realizations for $\sigma_{u}$ $=$ $0.1$, $\sigma_{w}$ $=$ $1.0$ and  \hbox{$\sigma_{v}$ $=$ $1.0$} (not shown for brevity), where $w^{u}$ is a zero mean white noise to ensure the persistence of excitation of the input. For each running simulation we employ both LS given by  \eqref{LS_estimator}  and CLS given by Proposition \ref{prop:cls}. The validation data is
simulated with different initial condition \hbox{$x_{0}\!\!=\!\![72.2~~381.5~~101.5~~1264.0~~128.0]^{T}$}. Figure \ref{fig:histogram_compart_system_CLS} shows the results regarding the RMSE of the Monte Carlo validation of the obtained models. For all the states the use of auxiliary information in the CLS improves the performance of the estimated models. Specifically, the CLS estimation of the $4$th and $5$th state components always outperform the LS estimation. For in the CLS estimator, $\overline{\textrm{RMSE}}_{4}$  is around $15$ times smaller than the same index for LS. For the  $5$th state,  CLS improves $\overline{\textrm{RMSE}}_{5}$ by a factor of 3.
 \begin{figure}[]
  \centering
    \includegraphics[scale=0.6]{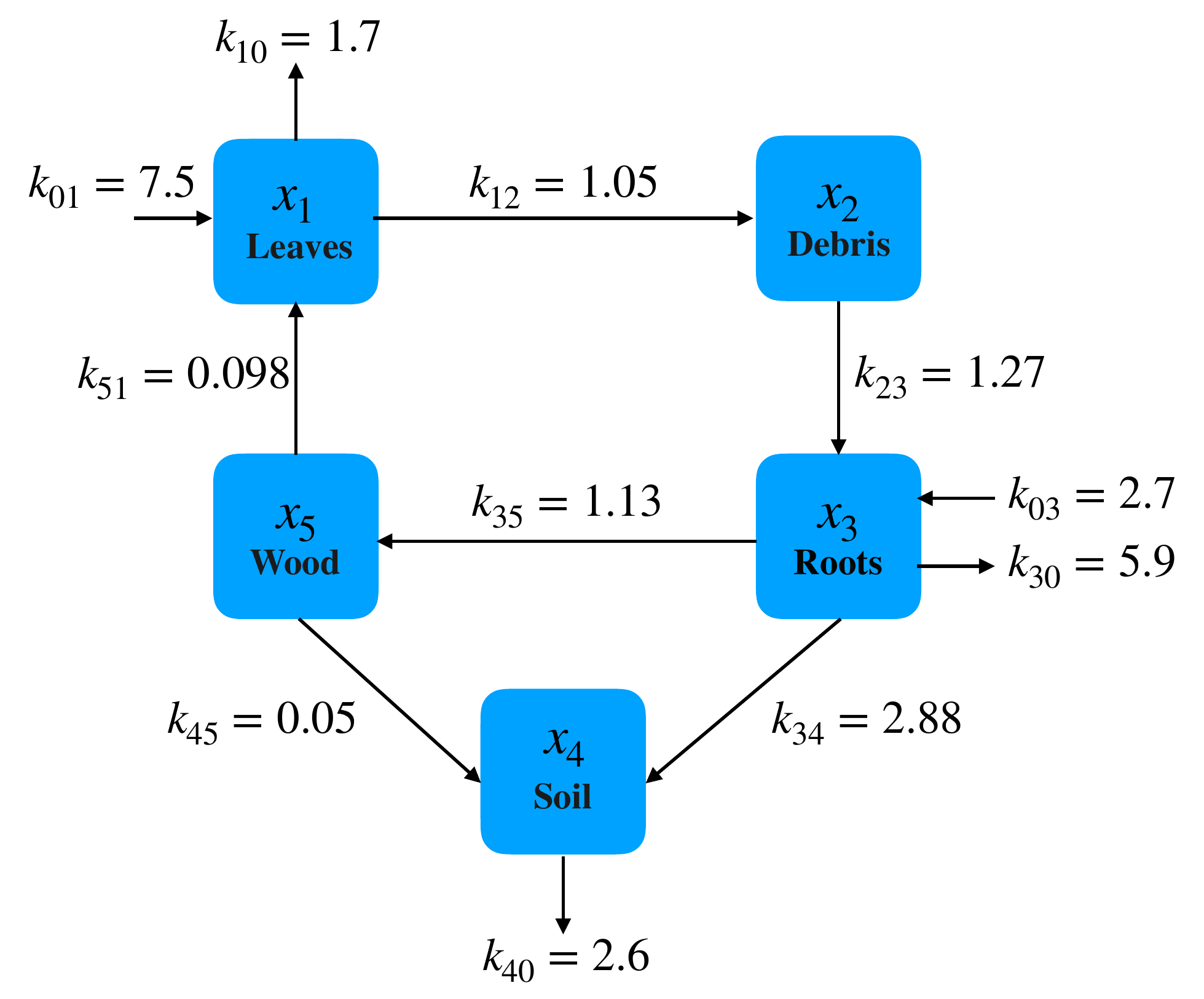}
\caption{Schematic compartmental model of nitrogen flow in a tropical forest. The inputs and outputs are assumed to be constant in the model. Adapted from \citet{Walter1999}. }      \label{fig:figura_artigo_forest_ecosystem}
\end{figure}
 \begin{figure}[]
  \centering
    \includegraphics[scale=0.85]{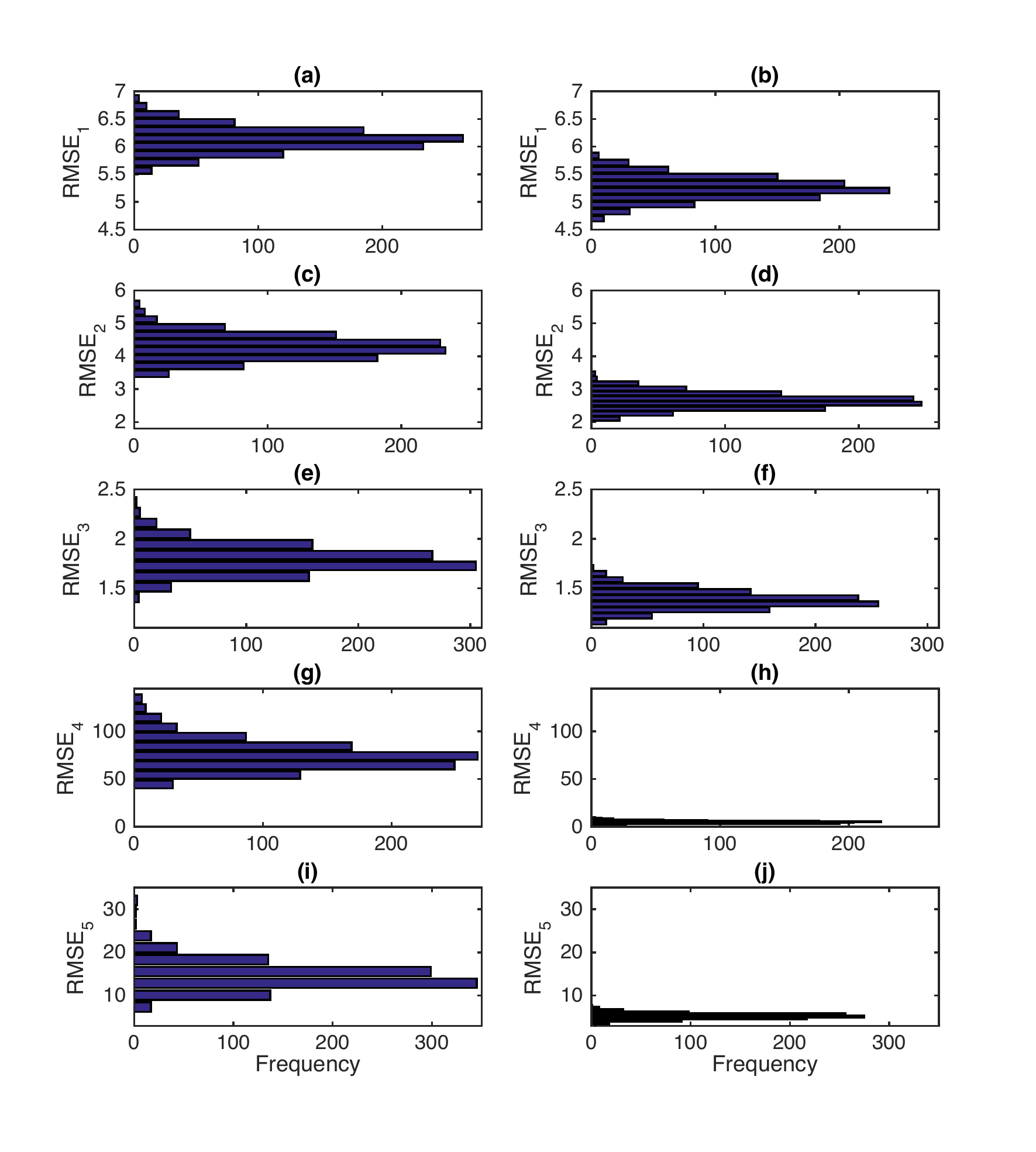}
\caption{Histogram of the RMSE between the validation data and LS (left) and CLS (right) estimation for all the $1000$-run Monte Carlo simulations.}      \label{fig:histogram_compart_system_CLS}
\end{figure}
 \vspace{-0.5cm}
\section{Concluding Remarks}\label{sec:conclusions}
We address the problem of modeling state-space dynamic systems for which the state vector satisfies an exactly known or an uncertain equality constraint (prior information). We assume that all state components are measured such that least-square methods can be used to estimate the state-space matrices. Both batch and recursive algorithms are considered. By means of the latter, the time-varying case is also addressed. 

First, we show how to map the known equality constraint on the state vector on an equality constraint on the parameter matrix to be estimated by the least-square based method. Then, we show how to rewrite the corresponding least squares problem into a vectorized form such that existing equality-constrained least squares methods may be used. 

In addition to obtaining state-space matrices that yield an equality-constrained model on the state vector, we observe that the usage of both exactly known  and uncertain prior information improves the prediction quality of the model compared to the case in which the equality constraint is not enforced. Such results are consistent with those from \citet{2011_Teixeira_Aguirre}. The algorithms here investigated are also of interest for gray-box subspace identification methods that employ least squares as an internal step; see \citet{Trnka2009,Alenany2011,privara2012,Alenany2013}.

\bibliographystyle{apacite}

\appendix
\section{} 
\label{appendixA}
Next, we present the proof of Proposition \ref{prop:rcls}.
\begin{proof} 
Given that \eqref{eq:projector} is a projector onto the null space of $D$, 
initialize the RLS equations given by \eqref{eq:rls_k}-\eqref{eq:rls_P} with the initial values \eqref{eq:rls_init1}-\eqref{eq:rls_init2} such that
\begin{eqnarray}
D \hat{\theta}_{\textrm{CLS},0} &=& d, \\
D P_{\textrm{CLS},0} &=&0.
\end{eqnarray}

For $k = 1$, multiplying \eqref{eq:rls_k}-\eqref{eq:rls_P} by $D$  we have 
\begin{eqnarray} \label{eq:DK_1}
DK_{1} &=& \frac{DP_{\textrm{CLS},0}\psi_{0}}{\psi^{T}_{0}P_{\textrm{CLS},0}\psi_{0}+1}=0,\\ 
\nonumber D\hat{\theta}_{\textrm{CLS},1}&=&D{\theta}_{\textrm{CLS},0}+DK_{1}\left( z_{1}-\psi^{T}_{0}{\theta}_{\textrm{CLS},0}\right),\\  \label{eq:DP_1}
&=&d,\\
\nonumber DP_{\textrm{CLS},1}&=&D\left(I_{n_p}-K_1\psi_{0}^T\right)P_{\textrm{CLS},0}, \\ 
 &=&DP_{\textrm{CLS},0}- DK_1\psi_{0}^TP_{\textrm{CLS},0}=0.
\end{eqnarray} 

By symmetry, from \eqref{eq:DK_1}-\eqref{eq:DP_1}, we verify that $DK_{2} =0$, $D\hat{\theta}_{\textrm{CLS},2}=d$ and $DP_{\textrm{CLS},2}=0$.

Likewise, at time $k+1$, we have $DK_{k+1}=0$, $D\hat{\theta}_{\textrm{CLS},k+1}=d$ and $DP_{\textrm{CLS},k+1}=0$.

Thus, by induction, we have $D K_{k}=0$, $D \hat{\theta}_{\textrm{CLS}, k}=d$ and $DP_{\textrm{CLS},k}=0$, \hbox{$\forall$ $k$}, completing the proof.
\end{proof}
\vspace{-0.32cm}
\section{}
 \label{appendixB}
In the following, we show that the classical LS estimator is biased for the estimation of the matrices of the state-space model \eqref{eq:proc}-\eqref{eq:obs}.
\begin{proof} 
Replace $y_{k}$ in \eqref{state_1}, write this result as the linear regression model \eqref{regression_eq} and apply the \textit{vectorization operador} \eqref{vec_operator}, then we obtain 
\begin{equation}
\textrm{vec}(y_{k+1})=\Psi_{k}^{T}\theta+\textrm{vec}(e_{k+1}). \nonumber
\end{equation}
where
 \begin{eqnarray} \label{state_expanded}
\Psi_{k}^{T}\! &\!\!\!=\!\!\!&\! \left[\!\!\!\begin{array}{cc}
         \left\{\!\left[\!\begin{array}{cc}
         y_{k-1}^{T} \! \!\otimes\! \!I_{n}& u^{T}_{k-1}\!\!\otimes \!\!I_{n}
       \end{array}\!\!\right]\theta+ \textrm{vec}(e_{k})\!\right\}\!\!\otimes\!\! I_{n} & u^{T}_{k} \!\!\otimes\! \! I_{n}
       \end{array}\!\!\right],\nonumber\\
   e_{k+1}\!&\!\!\!=\!\!\!&\! v_{k+1} - A\underline{v_k}+ G w_{k},\nonumber \\
e_{k} \!&\! \!\!=\!\!\!& \!\underline{v_{k}} - Av_{k-1}+ G w_{k-1}.\nonumber
\end{eqnarray}
The terms $v_{k}$ underlined are identical. So, the regressor $y_{k}$ between the braces is correlated with $e_{k+1}$. In this case, the LS estimator is biased. 
\end{proof}

\end{document}